\documentclass[11p]{article}
%documentclass[envcountsame]{llncs} 

\usepackage{latexsym,amsfonts,amssymb,amsmath,amscd,epic,graphics}
\usepackage{psfrag,epsfig}
\usepackage{graphicx}
\usepackage{color}  
\usepackage{subfigure}
\usepackage{psfrag,xspace}
\usepackage{multirow}

%usepackage{fullpage}
%\usepackage{hyperref} 
%\usepackage[sort,numbers]{natbib} 
\usepackage{algorithmic} 
\usepackage[linesnumbered,vlined,boxruled,boxed,algo2e]{algorithm2e}
\usepackage{ifthen}
\usepackage{lscape}
\usepackage{marvosym}

\usepackage{vector}
\usepackage{url}
\usepackage{enumerate}

\usepackage{ccfonts}

\usepackage{notations}

\newtheorem{theorem}{Theorem}
\newtheorem{lemma}[theorem]{Lemma}

\newtheorem{remark}[theorem]{Remark}

\title{Improved bounds for Continued Fractions variants for real root
  isolation}

\author{Elias P.~Tsigaridas \\
  Computer Science Department, Aarhus University, Denmark\\
  \texttt{elias (AT) cs.au.dk}
}

\date{}

\begin{document}

\maketitle

\begin{abstract}
  We consider the problem of isolating the real roots of a square-free
  polynomial with integer coefficients using (variants of) the
  continued fraction algorithm (CF).
  We introduce a novel way to compute a lower bound on the positive
  real roots of univariate polynomials.  This allows us to derive a
  worst case bound of $\sOB( d^6 + d^4\tau^2 + d^3\tau^2)$ for
  isolating the real roots of a polynomial with integer coefficients
  using the classic variant \cite{Akritas:implementation} of CF, where
  $d$ is the degree of the polynomial and $\tau$ the maximum bitsize
  of its coefficients.  This improves the previous bound of Sharma
  \cite{sharma-tcs-2008} by a factor of $d^3$ and matches
  the bound derived by Mehlhorn and Ray \cite{mr-jsc-2009} for another
  variant of CF; it also matches the worst case bound of the
  subdivision-based solvers.

  % \vspace{10pt}
  
  % {\it The improved bound bound that was claimed in an earlier version is removed, 
  %   since there was an error in the proof.}

  %%
\end{abstract}

% \vspace{0.9mm}
% \noindent
% {\bf Categories and Subject Descriptors:}
% F.2 [Theory of Computation]: Analysis of Algorithms and Problem Complexity;
% I.1 [Computing Methodology]: Symbolic and algebraic manipulation: Algorithms

% \vspace{5pt}
% {\bf Keywords} real root isolation, continued fraction,
% real root problem, separation bound, polynomial, Boolean complexity 

\section{Introduction}

The problem of isolating the real roots of a square-free polynomial
with integer coefficients is one of the most well-studied problems in
symbolic computation and computational mathematics. The goal is to
compute intervals with rational endpoints that contain one and only
one real root of the polynomial, and to have one interval for every
real root.

If we restrict ourselves to algorithms that perform computations with
rational numbers of arbitrary size, then we can distinguish two main
categories.  The first one consists of algorithms that are
subdivision-based; their process mimics binary search.  They bisect an
initial interval that contains all the real roots until they obtain
intervals with one or zero real roots.  The different variants differ
in the way that they count the number of real roots inside an
interval, for example using Sturm's theorem or Descartes' rule of
signs, see also Th.~\ref{th:Descartes}. Classical representatives are
the algorithms \func{sturm}, \func{descartes} and \func{bernstein}.
We refer the reader to
\cite{ESY:descartes,emt-lncs-2006,Dav:TR:85,KraMeh:descartes:05,Krandick:Isolation,Johnson-phd-91,Yap:SturmBound:05,RouZim:solve:03}
and references therein for further details. The worst case complexity
of all variants in this category is $\sOB(d^6 + d^4 \tau^2)$, where
$d$ is the degree of the polynomial and $\tau$ the maximum bitsize of
its coefficients. Especially, for the \func{sturm} solver, recently,
it was proved that its expected case complexity, if we consider
certain random polynomials as input, is $\sOB(r \,d^2 \tau)$, where
$r$ is the number of real roots \cite{egt-issac-2010}.  Let us also
mention the bitstream version of \func{descartes} algorithm,
cf. \cite{ms-jsc-2010} and references therein.

The second category contains algorithms that isolate the real roots of
a polynomial by computing their continued fraction expansion (CF).
Since successive approximants of a real number
define an interval that contains this number, CF computes the
partial quotients of the roots of the polynomial until the
corresponding approximants correspond to  intervals that isolate the real
roots.  Counting of the real roots is based on Descartes' rule of
signs (Th.~\ref{th:Descartes}) and termination is guaranteed by
Vincent's theorem (Th.~\ref{th:Vincent}).
There are several variants which they differ in the way that they
compute the partial quotients.
%There are several variants which they differ in the way that they
%compute the partial quotients in the continued fraction expansion of the real roots.

The first formulation of the algorithm is due to Vincent
\cite{Vincent}, who computed the partial quotients by successive
transformations of the form $x \mapsto x+1$. An upper bound on the
number of partial quotients needed was derived
by Uspensky \cite{u-te-48}. Unfortunately this approach leads to an
exponential complexity bound.  Akritas \cite{Akritas:implementation},
see also \cite{AkrStrVig-computing-2006,Akritas:NoUspensky}, treated
the exponential behavior of CF by treating the partial quotients as
lower bounds of the positive real roots, and computed the bounds using
Cauchy's bound. With this approach, $c$ repeated operations of the
form $x \mapsto x +1$ could be replaced by $x \mapsto x + c$. However,
his analysis assumes an ideal positive lower bound, that is that we
can compute directly the floor of the smallest positive real root.
% and it is not clear how to take into account the increased coefficient
% size of the transformed polynomial.  
In \cite{te-esa-2006}, it was proven, under the assumption that
Gauss-Kuzmin distribution holds for the real algebraic numbers, that
the expected complexity of CF is $\sOB( d^4 \tau^2)$.  By spreading
the roots, the expected complexity becomes $\sOB( d^4 + d^3 \tau)$
\cite{te-tcs-2008}.  The first worst-case complexity result of CF,
$\sOB( d^8 \tau^3)$, is due to Sharma \cite{sharma-tcs-2008}, without
any assumption.  He also proposed a variant of CF, that combines
continued fractions with subdivision, with complexity $\sOB(d^5
\tau^2)$. All the variants of CF in \cite{sharma-tcs-2008} compute
lower bounds on the positive roots using Hong's bound
\cite{Hong:jsc:98}, which is assumed to have quadratic arithmetic
complexity.  Mehlhorn and Ray \cite{mr-jsc-2009} proposed a novel way
of computing Hong's bound based on incremental convex hull
computations with linear arithmetic complexity.  A direct consequence
is that they reduced the complexity of the variant of CF combined with
subdivision \cite{sharma-tcs-2008} to $\sO( d^4 \tau^2)$, thus
matching the worst case complexity of the subdivision-based
algorithms. Using \cite{mr-jsc-2009} and fast Taylor shifts
\cite{GatGer:fast_shift:97}, the bound \cite{sharma-tcs-2008} on
classical variant of CF becomes $\sOB( d^7 \tau^3)$.

As far as the numerical algorithms are concerned, the best known bound
for the problem is due to Pan \cite{Pan02jsc,Pan97rev} and Sch\"onhage
\cite{Sch82}, see also \cite{s-icm-1986}, $\sOB(d^3\tau)$.  Moreover,
it seems that Pan's approach could be improved to $\sOB(d^2\tau)$.
This class of algorithms approximate the roots, real and complex, of
the input polynomials up to a precision. They could be turned to root
isolation algorithms by requiring them to approximate up to the
separation bound, that is the minimum distance between the roots.  The
crux of the algorithms is that they recursively split the polynomial
until we obtain linear factors that approximate sufficiently all the
roots, real and complex.  We also refer to a recent approach that
concentrates only on the real roots \cite{pan-snc-2007}.
For an implementation of Sch\"onhage's algorithm we refer the reader 
to the routine CPRTS, p.12 in Addenda, based on the multitape Turing machine%
\footnote{\url{http://www.iai.uni-bonn.de/~schoe/tp/TPpage.html}}.
We are not aware of any implementation of Pan's algorithm.
In the special case where all the roots of the polynomial are real,
also called the {\em real root problem}, dedicated numerical
algorithms were proposed by Reif \cite{r-focs-1993} and Ben-Or and
Tiwari \cite{bt-joc-1990} for approximating the roots.
%, and also effective parallel versions \cite{bfkt-hyperbolic-solve}.
Nevertheless, their Boolean complexity is also $\sOB(d^3 \tau)$.
Quite recently, Sagraloff \cite{s-arxiv-isol-10} announced a variant
of the bitstream version of \func{descartes} algorithm with complexity
$\sOB(d^3\tau^2)$.

% There is a huge amount of work on the problem of isolating or
% approximating the (real) roots of a polynomial. The presented
% references represent only a small part of it. For this we encourage
% the reader to refer to the references.

{\em Our contribution.}
We present a novel way to compute a lower bound on the positive real
roots of a univariate polynomial (Lem.~\ref{lem:plb-computation}).
The proposed approach computes the floor of the root (possible
complex) with the smallest positive real part that contributes to the
number of the sign variations in the coefficients list of the
polynomial.  Our bound is at least as good as Hong's bound
\cite{Hong:jsc:98}.  Using this lower bound we improve the
worst case bit complexity bound of the classical variant of CF,
obtained by Sharma \cite{sharma-tcs-2008}, by a factor of $d^3$. We
obtain a bound of $\sOB( d^6 + d^4 \tau^2 )$ or $\sOB(N^6)$, where
$N=\max\{d, \tau\}$, (Th.~\ref{th:CF-bound}), which matches the worst
case bound of the subdivision-based solvers and also matches the bound
due to Mehlhorn and Ray \cite{mr-jsc-2009} achieved for another
variant of CF; it also matches the worst case bound of the
subdivision-based solvers
\cite{ESY:descartes,emt-lncs-2006,Dav:TR:85,KraMeh:descartes:05,Krandick:Isolation,Johnson-phd-91,Yap:SturmBound:05,RouZim:solve:03}.
%%

% \paragraph*{Paper Structure}
% The rest of the paper is structured as follows.
% First we specify our notation.
% Sec.~\ref{sec:intro-cf} presents a short introduction to the theory of continued fractions.
% In Sec.~\ref{sec:classic-cf} we present the algorithm to compute lower bounds and we derive the worst case complexity bound of CF.
% In Sec.~\ref{sec:new-variant} we present iCF and its complexity analysis.

{\em Notation.}
In what follows \OB, resp. \OO, means bit, resp. arithmetic,
complexity and the \sOB, resp. \sO, notation means that we are
ignoring logarithmic factors.
For a polynomial $A \in$ $\ZZ[x]$,
$\dg{A} = d$ denotes its degree and $\bitsize{A} =\tau$ the maximum bitsize of
its coefficients, including a bit for the sign. 
For $a \in \QQ$, $\bitsize{ a} \ge 1$
is the maximum bitsize of the numerator and the denominator.
Let $\Multiply{\tau}$ denote the bit complexity of multiplying two integers of size
$\tau$; using \textsc{FFT}, $\Multiply{\tau} = \sOB( \tau )$.
To simplify notation, we will assume throughout the paper that 
$\lg(\dg{A}) = \lg{d} = \OO(\tau) = \OO( \bitsize{A})$.
By $\var(A)$ we denote the number of sign variations in the list of
coefficients of $A$.  We use $\Delta_{\gamma}$ to denote the minimum
distance between a root $\gamma$ of a polynomial $A$ and any other
root, we call this quantity {\em local separation bound};
$\Delta=\min_{\gamma}{\Delta_{\gamma}}$ is the {\em separation bound},
that is the minimum distance between all the roots of $A$.

\section{A short introduction to continued fractions}
\label{sec:intro-cf}

Our presentation follows closely \cite{te-tcs-2008}.
For additional details we refer the reader to, e.g., \cite{Yap2000,BomPoo:contfrac:95,Poorten:intro}.
In general, a {\em simple (regular) continued fraction} 
is a (possibly infinite)  expression of the form 
\begin{displaymath}
  q_0 +
  \cfrac{1}{
    q_1 + \cfrac{1}{
      q_2 + \dots  %+ \cfrac{1}{c_n}
    }
  } =
  [ q_0, q_1, q_2, \dots ],
\end{displaymath}
where the numbers $q_i$ are called {\em partial quotients}, 
$q_i \in \ZZ$ and $q_i \geq 1$ for $i > 0$.  
Notice that $q_0$ may have any sign, however, in our real root isolation
algorithm $q_0 \geq 0$, without loss of generality.
By considering the recurrent relations
\begin{equation}
  \begin{array}{cccc}
    P_{-1} = 1, & P_{0} = q_0, & P_{n+1} = q_{n+1}\, P_{n} + P_{n-1},\\
    Q_{-1} = 0, & Q_{0} = 1,   & Q_{n+1} = q_{n+1}\, Q_{n} + Q_{n-1},
  \end{array}
  \label{eq:recur-PQ}
\end{equation}
it can be shown by induction that $R_n = \frac{P_n}{Q_n} = [q_0, q_1, \dots, q_n]$,
for $n=0,1,2,\dots$.
% and moreover that
% \begin{displaymath} 
%   \begin{array}{lcl}
%     P_{n}\, Q_{n+1} - P_{n+1}\, Q_{n} & = & (-1)^{n+1},\\
%     P_{n}\, Q_{n+2} - P_{n+2}\, Q_{n} & = & (-1)^{n+1} c_{n+2}.
%   \end{array}
% \end{displaymath}

If $\gamma = [q_0, q_1, \dots]$ then 
$
\gamma = q_0 + \frac{1}{Q_0 Q_1} - \frac{1}{Q_1 Q_2} + \dots
= q_0 + \sum_{n=1}^{\infty}{ \frac{(-1)^{n-1}}{Q_{n-1}Q_n}}
$ 
and since this is a series of decreasing alternating terms it converges to some real
number $\gamma$.
A finite section $R_n = \frac{P_n}{Q_n} = [ q_0, q_1, \dots, q_n]$
is called the $n-$th {\em convergent} (or {\em approximant}) of $\gamma$
and the tails $\gamma_{n+1} = [q_{n+1}, q_{n+2}, \dots]$  are known as its
{\em complete quotients}. 
That is $\gamma = [ q_0, q_1, \dots, q_n,$ $ \gamma_{n+1}]$
for $n=0,1,2,\dots$.
There is an one to one correspondence between the real numbers and the continued
fractions, where evidently the finite continued fractions correspond
to rational numbers.

It is known that $Q_n \geq F_{n+1}$ 
and that $F_{n+1} < \phi^n < F_{n+2}$, 
where $F_n$ is the $n-$th Fibonacci number 
and $\phi = \frac{1+ \sqrt{5}}{2}$ is the golden ratio.
Continued fractions are the best rational approximation (for a given denominator size).
This is as follows:
$ \frac{1}{Q_n(Q_{n+1} +Q_n)} 
  \leq \left| \gamma - \frac{P_n}{Q_n} \right| 
  \leq \frac{1}{Q_n Q_{n+1}} < \phi^{-2n+1}.
$

In order to indicate or to emphasize that a partial quotient or an
approximant belong to a specific real number $\gamma$, we use the
notation $q_i^{\gamma}$ and $R_n^{\gamma}
=P_n^{\gamma}/Q_n^{\gamma}$, respectively.  We also use $q_i^{(k)}$
and $R_n^{(k)} =P_n^{(k)}/Q_n^{(k)}$, where $k$ is a non-negative
integer, to indicate that we refer to the (real part of the) root
$\gamma_k$ of a polynomial $A$.  The ordering of the roots is
considered with respect to the magnitude of their (positive) real part.

\section{Worst case complexity of CF}
\label{sec:classic-cf}

\begin{theorem}[Descartes' rule of sign]
  \label{th:Descartes}
  The number $R$ of real roots of $A(x)$ in $(0, \infty)$ is bounded by $\var(A)$ and
  we have $R \equiv \var(A) \mod 2$.
\end{theorem}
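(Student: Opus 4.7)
The plan is to reduce Descartes' rule to a single combinatorial lemma about multiplication by a positive linear factor, and then to apply it inductively on the positive real roots of $A$.

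First I would establish the key lemma: for any $r > 0$ and any nonzero polynomial $A$, if $B(x) = (x - r) A(x)$, then $\var(B) \geq \var(A) + 1$ and $\var(B) - \var(A)$ is odd. Writing $A(x) = \sum_i a_i x^i$ and $B(x) = \sum_i b_i x^i$ with $b_i = a_{i-1} - r\, a_i$, the proof proceeds by walking through the coefficient sequence and matching each sign change in $(a_i)$ with at least one sign change in $(b_i)$, while showing that the sign flip introduced at the top and bottom of the sequence by the factor $(x-r)$ (since $b_{d+1} = a_d$ but $b_0 = -r a_0$) forces one additional variation. I expect this combinatorial lemma to be the main technical obstacle, since one must handle carefully the possibility of vanishing coefficients and long runs of equal sign.

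Next, I would reduce the theorem to this lemma. Without loss of generality $A(0) \neq 0$, since factoring out a power of $x$ changes neither $\var(A)$ nor the count of positive real roots. Letting $r_1, \dots, r_R$ be the positive real roots of $A$ listed with multiplicity, I write
\begin{displaymath}
A(x) = Q(x) \prod_{i=1}^{R}(x - r_i),
\end{displaymath}
where $Q$ has no positive real roots. Applying the key lemma $R$ times to the successive products gives $\var(A) \geq \var(Q) + R \geq R$, which is the desired inequality.

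For the parity, I would observe that since $Q$ has no positive real roots and $Q(0) \neq 0$, the function $Q$ has constant sign on $[0, \infty)$, so the lowest nonzero coefficient of $Q$ (whose sign is that of $Q(0)$) and the leading coefficient of $Q$ (whose sign is that of $Q(+\infty)$) agree; this forces $\var(Q)$ to be even. Combined with the parity statement of the key lemma, which iterated $R$ times yields $\var(A) \equiv \var(Q) + R \pmod{2}$, this gives $\var(A) \equiv R \pmod{2}$, completing the proof.
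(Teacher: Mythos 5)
The paper states Descartes' rule of signs as a classical result and gives no proof of it, so there is no in-paper argument to compare against; I can only assess your proposal on its own terms. Your route is the standard textbook proof, and it is correct in outline: the reduction of the theorem to the key lemma is clean (dividing out the power of $x$, writing $A = Q\prod_i(x-r_i)$ with the roots listed \emph{with multiplicity} --- which is the right reading of the statement, since otherwise the parity claim fails for non-square-free inputs such as $(x-1)^2$), and the parity argument via the constant sign of $Q$ on $[0,\infty)$, hence the agreement in sign of its lowest and highest nonzero coefficients, hence $\var(Q)$ even, is exactly right. The one place where you have asserted rather than proved is the key lemma itself, and you correctly identify it as the technical crux. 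To close it, the usual device is: for each sign block of $(a_i)$ pick its topmost coefficient $a_{j}$; then $b_{j+1} = a_{j} - r\,a_{j+1}$ is nonzero and has the sign of $a_{j}$, because $a_{j+1}$ is either zero or lies in the previous block and so contributes with the same sign after multiplication by $-r$. This produces $\var(A)+1$ alternations in $B$ once you append $b_0 = -r a_0$, and the parity statement follows from comparing the signs of the extreme nonzero coefficients of $A$ and $B$ ($b_{d+1}=a_d$ but $b_0=-ra_0$). With that detail supplied, your proof is complete.
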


 \begin{remark}
   \label{rem:Descartes}
   In general Descartes' rule of sign obtains an overestimation of the number of the
   positive real roots.
   However, if we know that $A$ is {\em hyperbolic}, i.e. has only real
   roots, or when the number of sign variations is 0 or 1 then it counts exactly.
 \end{remark}

The CF algorithm depends on the following theorem, which 
dates back to Vincent's theorem in 1836 \cite{Vincent}.
The inverse of Th.~\ref{th:Vincent} can be found in
\cite{Akri89,ColLoo82,Mign91}.
The version of the theorem that we present is due to Alesina and
Galuzzi \cite{galuzzi98:vincent}, 
see also \cite{u-te-48,Akritas:implementation,Akri89,Akritas:NoUspensky},
and its proof is closely connected to the one and two circle theorems 
(refer to \cite{KraMeh:descartes:05,galuzzi98:vincent} and references therein).
\begin{theorem}
  \label{th:Vincent}
  {\rm \cite{galuzzi98:vincent}}
  Let  $A \in \ZZ[x]$ be square-free and let $\Delta > 0$ be the
  separation bound, i.e. the smallest distance between two (complex) roots of $A$.
  Let $n$ be the smallest index such that  
  $F_{n-1}\,F_{n}\, \Delta > \frac{2}{\sqrt{3}}$,
  where $F_n$ is the $n$-th Fibonacci number.
  Then the map $x \mapsto [c_0 , c_1, \dots, c_n, x]$,
  where $c_0, c_1, \dots, c_n$ is an arbitrary sequence of positive integers,
  transforms $A(x)$ to $A_n( x)$, whose list of coefficients has no more than one sign variation.
\end{theorem}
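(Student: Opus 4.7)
The plan is to reformulate the substitution $x \mapsto [c_0,c_1,\ldots,c_n,x]$ as a Möbius transformation that sends the positive real axis to a short real interval, and then invoke the two-circle (Obreshkoff) theorem, cited in the excerpt as the geometric backbone of Vincent's theorem.

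\textbf{Step 1 (Möbius form).} Using the recurrences~(\ref{eq:recur-PQ}) with $q_i = c_i$, a short induction on $n$ shows that
\[
M(x) \;:=\; [c_0,c_1,\ldots,c_n,x] \;=\; \frac{P_n\, x + P_{n-1}}{Q_n\, x + Q_{n-1}},
\]
so that $A_n(x) = (Q_n x + Q_{n-1})^{\dg{A}}\,A(M(x))$. The map $M$ is a Möbius transformation sending $(0,\infty)$ homeomorphically onto the open interval $I$ with endpoints $R_{n-1} = P_{n-1}/Q_{n-1}$ and $R_{n} = P_n/Q_n$; the classical identity $P_n Q_{n-1} - P_{n-1} Q_n = (-1)^{n-1}$ then gives $|I| = 1/(Q_{n-1} Q_n)$.

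\textbf{Step 2 (Two-circle theorem).} Next I would invoke the two-circle theorem of Obreshkoff/Alesina--Galuzzi: for any bounded real interval $I$, if $\mathcal{O}(I)$ denotes the Obreshkoff lens (the union of the two open disks whose common chord is $I$ and which meet the real line at an angle of $\pi/3$ at each endpoint), then the number of sign variations of a polynomial obtained from $A$ by a Möbius change of variable sending $(0,\infty)$ onto $I$ is bounded above by the number of roots of $A$, counted with multiplicity, that lie in $\mathcal{O}(I)$. An elementary geometric computation shows that $\mathcal{O}(I)$ is contained in a disk about the midpoint of $I$ whose diameter is a small constant multiple of $|I|$; a careful optimisation, as carried out in \cite{galuzzi98:vincent}, refines this constant to the value $2/\sqrt{3}$ that appears in the hypothesis.

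\textbf{Step 3 (Closing the estimate).} Since $A$ is square-free with separation bound $\Delta$, the region $\mathcal{O}(I)$ contains at most one root of $A$ once the diameter bound from Step~2 is strictly smaller than $\Delta$, which combined with $|I| = 1/(Q_{n-1}Q_n)$ becomes $Q_{n-1}Q_n\,\Delta > 2/\sqrt{3}$. Inserting the Fibonacci lower bound $Q_k \geq F_{k+1}$ recalled in Section~\ref{sec:intro-cf} and tracking the one-step index shift between the convergents and the Fibonacci indexing, the minimality of $n$ in the hypothesis $F_{n-1}F_n\,\Delta > 2/\sqrt{3}$ is exactly sufficient to force at most one root of $A$ into $\mathcal{O}(I)$. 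The two-circle theorem then delivers $\var(A_n) \leq 1$, as claimed.

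The main obstacle is the two-circle theorem itself: it rests on a nontrivial complex-analytic correspondence between sign variations of a Möbius-transformed real polynomial and the distribution of its complex roots relative to the Obreshkoff lens. I would use it as a black box, following \cite{galuzzi98:vincent,KraMeh:descartes:05}; the remaining arguments are a routine computation with continued-fraction convergents and Fibonacci bounds.
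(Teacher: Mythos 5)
The paper itself gives no proof of Theorem~\ref{th:Vincent}: the statement is imported verbatim from Alesina and Galuzzi \cite{galuzzi98:vincent}, with only the remark that its proof is closely connected to the one- and two-circle theorems. Your route --- writing the substitution via the recurrences~(\ref{eq:recur-PQ}) as the M\"obius map $M(x)=\frac{P_n x+P_{n-1}}{Q_n x+Q_{n-1}}$ sending $(0,\infty)$ onto an interval $I$ of length $1/(Q_{n-1}Q_n)$, bounding that length by $Q_k\ge F_{k+1}$, and invoking the two-circle theorem as a black box --- is exactly the route the paper points to, and Step~1 together with the Fibonacci bookkeeping is fine.

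There is, however, a genuine gap at the geometric heart of Steps~2--3. The two Obreshkoff disks for $I$ meet the real line at angle $\pi/3$, so \emph{each single disk} has diameter $\frac{2}{\sqrt{3}}|I|$; but their union $\mathcal{O}(I)$ has diameter $\sqrt{3}\,|I|$ (the apexes of the two equilateral triangles erected on $I$ belong to $\mathcal{O}(I)$ and are $\sqrt{3}\,|I|$ apart), and it is not contained in any disk of diameter $\frac{2}{\sqrt{3}}|I|$ --- no ``careful optimisation'' can shrink the enclosing disk to that value. Hence the hypothesis $F_{n-1}F_n\Delta>\frac{2}{\sqrt{3}}$, which via $Q_{n-1}Q_n\ge F_nF_{n+1}\ge F_{n-1}F_n$ gives $\frac{2}{\sqrt{3}}|I|<\Delta$, only guarantees that each disk separately contains at most one root; it does not exclude a complex-conjugate pair $\gamma,\bar\gamma$ with $\gamma$ in the upper disk and $\bar\gamma$ in the lower one, so ``at most one root in $\mathcal{O}(I)$'' does not follow from your diameter bound. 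This is precisely where the black box needs care: the correct form of Obreshkoff's theorem is not ``$\var(A_n)$ is at most the number of roots in $\mathcal{O}(I)$'' in general, but ``if the area bounded by disks meeting the line at angle $\pi/(k+2)$ contains at most $k$ roots then $\var\le k$'', and with two roots in the angle-$\pi/3$ region one only gets $\var\le 2$ by passing to the larger angle-$\pi/4$ region. Ruling out, or separately handling, the straddling conjugate pair is the actual content of the Alesina--Galuzzi argument, and it is missing from the sketch.
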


For a polynomial $A=\sum_{i=0}^{d}{a_ix^i}$, 
where $\gamma$ correspond to its (complex) roots, 
the Mahler measure, $\Mahler{A}$, of $A$ is 
$\Mahler{A} = a_d \prod_{\abs{\gamma} \geq 1}{\abs{\gamma}}$,
e.g. \cite{Mign91,Yap2000}.
If we further assume that $A\in \ZZ[x]$ and $\bitsize{A}=\tau$
then 
$\Mahler{A} \leq \norm{A}_2 \leq \sqrt{d+1} \norm{A}_{\infty} 
= 2^{\tau} \sqrt{d+1} $,
and so $\prod_{\abs{\gamma} \geq 1}{\abs{\gamma}} \leq 2^{\tau} \sqrt{d+1}$.

We will also use the following  aggregate bound.
For a proof we refer to e.g.~\cite{te-tcs-2008,Dav:TR:85,Yap:SturmBound:05,Mign91,Johnson-phd-91}.
\begin{theorem}
  \label{th:dmm-1}
  Let $A \in \ZZ[x]$ such that $\dg{A}=d$ and $\bitsize{A}=\tau$.
  Let $\gamma$ denotes its distinct roots, then 
  \begin{displaymath}
  \begin{aligned}
    &\prod_{\gamma}{ \Delta_{\gamma}} \geq 2^{-d^2} \, \Mahler{A}^{-2d}
    \Leftrightarrow -\lg{ \prod_{\gamma}{ \Delta_{\gamma}}} = 
    -\sum_{\gamma}{\lg{\Delta_{\gamma}}} \leq 3 d^2 + 3 d \lg{d} + 3 d \tau.
  \end{aligned}
\end{displaymath}

\end{theorem}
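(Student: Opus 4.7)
\medskip
\noindent\textbf{Proof plan.} I would establish the stronger form $\prod_\gamma \Delta_\gamma \geq 2^{-d^2}\,\Mahler{A}^{-2d}$ directly and then obtain the logarithmic inequality by taking $-\lg$ and using the already-quoted estimate $\lg \Mahler{A} \leq \tau + \tfrac{1}{2}\lg(d+1)$. The argument combines the classical discriminant lower bound with a Mahler-type upper bound on the ``leftover'' factors in the pairwise root-difference product after one extracts the nearest-neighbour differences.

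First, since $A \in \ZZ[x]$ is square-free, $\mathrm{disc}(A) = a_d^{2d-2}\prod_{i<j}(\gamma_i - \gamma_j)^2$ is a nonzero integer, so $|\mathrm{disc}(A)| \geq 1$ and hence $\prod_{i\neq j}|\gamma_i - \gamma_j| \geq |a_d|^{-(2d-2)}$. This lower-bounds the \emph{full} product of ordered pairwise differences.

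Second, for each root $\gamma_i$, pick a nearest neighbour $\gamma_{k(i)}$ so that $\Delta_{\gamma_i} = |\gamma_i - \gamma_{k(i)}|$; the $d$ directed edges $(i,k(i))$ induce an undirected multigraph $G$ on the $d$ roots (mutual pairs contributing multiplicity $2$, non-mutual pairs multiplicity $1$). Writing $\prod_{i\neq j}|\gamma_i - \gamma_j| = \big(\prod_\gamma \Delta_\gamma\big)\cdot R$, the leftover $R$ consists of $d^2 - 2d$ factors, and the exponent with which any fixed root $\gamma_\ell$ appears in $R$ equals $2(d-1) - \deg_G(\ell) \leq 2(d-1)$. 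Each factor obeys $|\gamma_i - \gamma_j| \leq 2\max(1,|\gamma_i|)\max(1,|\gamma_j|)$, and $\prod_\ell \max(1,|\gamma_\ell|) = \Mahler{A}/|a_d|$, so $R \leq 2^{d^2-2d}\,(\Mahler{A}/|a_d|)^{2(d-1)}$.

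Dividing the first bound by the second and using $\Mahler{A} \geq 1$ yields $\prod_\gamma \Delta_\gamma \geq 2^{-(d^2 - 2d)}\,\Mahler{A}^{-2(d-1)} \geq 2^{-d^2}\,\Mahler{A}^{-2d}$, as desired; the logarithmic version $d^2 + 2d\tau + d\lg(d+1) \leq 3 d^2 + 3 d \lg d + 3 d \tau$ then follows with room to spare. The main obstacle is the bookkeeping in the second step: one must track multiplicities in the nearest-neighbour multigraph and verify that the simple vertex-wise upper bound $2(d-1)$ on the $R$-exponents still collapses the whole leftover product into $\Mahler{A}^{O(d)}$. This careful accounting is precisely what saves a factor of $d$ in the exponent of $\Mahler{A}$ compared with the naive strategy of applying the single-root Mahler separation bound to each of the $d$ roots independently (which would instead yield an $O(d^2\tau)$ bound in the logarithmic form).
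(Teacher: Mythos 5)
Your argument is correct, and it is worth noting that the paper itself does not prove Theorem~\ref{th:dmm-1} at all: it simply cites \cite{te-tcs-2008,Dav:TR:85,Yap:SturmBound:05,Mign91,Johnson-phd-91}, where the statement is obtained as a corollary of the Davenport--Mahler bound. The standard route in those references bounds the product of root differences over the edges of a suitable (acyclic, bounded in-degree) graph by applying Hadamard's inequality to a column-reduced Vandermonde matrix whose determinant squared is $\mathrm{disc}(A)/a_d^{2d-2}$. Your route is genuinely different and more elementary: you take the \emph{full} ordered pairwise product, lower-bound it by $|a_d|^{-(2d-2)}$ via $|\mathrm{disc}(A)|\geq 1$ (this needs $A$ square-free, which is the paper's standing assumption even though the theorem statement omits it), and upper-bound the leftover factor $R$ term-by-term via $|\gamma_i-\gamma_j|\leq 2\max(1,|\gamma_i|)\max(1,|\gamma_j|)$, using the fact that each root occurs in $R$ with exponent at most $2(d-1)$ and that each base $\max(1,|\gamma_\ell|)$ is at least $1$. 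The bookkeeping you flag as the main obstacle does check out: the $d$ extracted factors are $d$ distinct ordered pairs, so $R$ has $d^2-2d$ factors and the exponent accounting is exactly as you state, giving $\prod_\gamma\Delta_\gamma\geq 2^{-(d^2-2d)}\Mahler{A}^{-2(d-1)}\geq 2^{-d^2}\Mahler{A}^{-2d}$, and the logarithmic form follows from $\lg\Mahler{A}\leq\tau+\tfrac12\lg(d+1)$. What the Hadamard/Vandermonde approach buys is a slightly sharper exponent on $\Mahler{A}$ (roughly $d-1$ rather than $2d$) and the flexibility to bound products over more general edge sets; what your approach buys is a short, self-contained proof requiring nothing beyond the discriminant integrality and the triangle inequality, which is entirely adequate for the $\OO(d^2+d\tau)$ bound the paper actually uses.
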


\subsection{The tree}

The CF algorithm relies on Vincent's theorem
(Th.~\ref{th:Vincent}) and Descartes' rule of sign
(Th.~\ref{th:Descartes}) to isolate the positive real roots of 
a square-free polynomial $A$. 
The negative roots are isolated after we perform the transformation
$x \mapsto -x$; hence it suffices to consider only the case of
positive real roots throughout the analysis.

The pseudo-code of the classic variant of CF is presented in
Alg.~\ref{alg:CF}.

Given a polynomial $A$, we compute the floor of the smallest positive
real root (\func{plb} = Positive Lower Bound).  The {\em ideal}
\func{plb} is a function that can determine whether a polynomial has
positive real roots, and if there are such roots then returns the
floor of the smallest positive root of the polynomial.

Then we perform the transformation $x \mapsto x + \rat{b}$, obtaining a
polynomial $A_{\rat{b}}$.
It holds that $\var(A) \geq \var(A_{\rat{b}})$.
The latter polynomial is transformed to $A_1$ 
by the transformation $x \mapsto 1 + x$
and if $\var(A_1) = 0$ or $\var(A_1) = 1$,
then $A_{\rat{b}}$ has 0, resp. 1, real roots greater than 1,
or equivalently
$A$ has 0, resp. 1, real roots greater than $\rat{b}+1$ 
(Th.~\ref{th:Descartes}).
If $\var(A_1) < \var(A_{\rat{b}})$ then (possibly)
there are real roots of $A_{\rat{b}}$ in $(0, 1)$,
or equivalently, 
there are real roots of $A$ in $(\rat{b}, \rat{b}+1)$,
due to Budan's theorem.
We apply the transformation $x \mapsto 1/(1+x)$ to $A_{\rat{b}}$,
and we get the polynomial $A_2$.
If $\var(A_2) = 0$ or $\var(A_2) = 1$,
$A_{\rat{b}}$ has 0, resp. 1,
real root less than 1 (Th.~\ref{th:Descartes}),
or equivalently
$A$ has 0, resp. 1, real root less than $\rat{b}+1$,
or to be more specific in $(\rat{b}, \rat{b}+1)$ (Th.~\ref{th:Descartes}).
If the transformed polynomial, $A_1$ and $A_2$,
have more than one sign variations, then we apply \func{plb} to them
and we repeat the process.

Following 
\cite{Akritas:implementation,te-tcs-2008,sharma-tcs-2008} we consider
the process 
of the algorithm as an infinite binary tree. The nodes of the tree
hold polynomials and (isolating) intervals.
The root of the tree corresponds to the original polynomial $A$ and
the shifted polynomial $A_{\rat{b}}$. 
The branch from a node to a right child corresponds to the map 
$x \mapsto x+1$, which yields polynomial $A_1$,
while to the left child to the map $x \mapsto 1/(1+x)$,
which yields polynomial $A_2$.
The sequence of transformations that we perform is equivalent to the
sequence of transformations in Th.~\ref{th:Vincent}, 
and so the leaves of the tree hold (transformed) polynomials that have
no more than one sign variation, if Th.~\ref{th:Vincent} holds.

A polynomial that corresponds to a leaf of the tree and
has one sign variation it is produced after a transformation as in
Th.~\ref{th:Vincent}, using positive integers $q_0, q_1, \dots, q_n$.
The compact form of this is 
$M : x \mapsto \frac{P_n x + P_{n-1}}{Q_n x + Q_{n-1}}$,
where $\frac{P_{n-1}}{Q_{n-1}}$ and $\frac{P_n}{Q_n}$ are consecutive
convergents of the continued fraction $[q_0, q_1, \dots, q_n]$.
The polynomial has one real root in
$(0,\infty)$, thus the (unordered) endpoints of the isolating interval are 
$M(0) = \frac{P_{n-1}}{Q_{n-1}}$ and $M(\infty) = \frac{P_n}{Q_n}$.

There are different variants of the algorithm that differ in the way
they compute \func{plb}.
A \func{plb} realization that actually computes exactly the floor of the
smallest positive real root is called {\em ideal}, but unfortunately
has a prohibitive complexity. 

A crucial observation is that Descartes' rule of sign
(Th.~\ref{th:Descartes}), that counts the number of sign variations
depends not only on positive real roots, but also on some complex
ones; which have positive real part. 
Roughly speaking CF is trying to isolate the positive real parts of the roots of
$A$ that contribute to the sign variations. 
Thus, the ideal \func{plb} suffices to compute the floor of the smallest
positive real part of the roots of $A$ that contribute to the number
of sign variations.  For this we will use
Lem.~\ref{lem:plb-computation}.  Notice that all the positive real
roots contribute to the number of sign variation of $A$, but this is
not always the case for the complex roots with positive real part.

\subsection{Computing a partial quotient}

\begin{lemma}
  \label{lem:plb-computation}
  Let $A \in \ZZ[x]$, such that $\dg{A} = d$ and $\bitsize{A}=\tau$.
  We can compute the first partial quotient, or in the other words the
  floor%
  \footnote{{\scriptsize We choose to use $c$ instead of $q_0$ because
      in the complexity analysis that follow $A$ could be a result of
      a shift operation, thus the computed integer may not be the
      0-th partial quotient of the root that we are trying to
      approximate.}}, 
  $c$,
  of the real part of the root with the smallest
  real part, that contributes to the sign variations of $A$ in 
  $\sOB(d \tau \lg{c} + d^2 \lg^2{c})$.
\end{lemma}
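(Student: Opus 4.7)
\medskip

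\noindent\textbf{Proof plan.}
The plan is to characterize $c$ purely in terms of Descartes sign counts and then compute it by a one-dimensional search that uses fast Taylor shifts. Concretely, if $\alpha$ is the root of $A$ with smallest real part among those that \emph{contribute} to $\var(A)$, then $c=\lfloor \Re(\alpha)\rfloor$ is the largest non-negative integer such that the shifted polynomial $A(x+c)$ has the same number of sign variations as $A$. The reason is that, by a Budan-type argument, $\var(A(x+t))$ can only decrease as $t$ grows, and it drops strictly when $t$ crosses the real part of a contributing root (real or a pair of complex conjugates). In particular, for every non-negative integer $t\le c$ we have $\var(A(x+t))=\var(A)$, while $\var(A(x+c+1))<\var(A)$. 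This turns the task of computing the floor into the problem of locating, by sign-variation tests on shifted polynomials, the first integer $t$ at which $\var(A(x+t))$ strictly decreases.

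\medskip

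\noindent To find this integer efficiently I would use an \emph{exponential search followed by a binary search}, a standard strategy when $c$ is not known a priori. In the doubling phase we test $t=1,2,4,\dots,2^k,\dots$ until the first index $k$ with $\var\bigl(A(x+2^k)\bigr)<\var(A)$; since the sign-variation count is monotone non-increasing, $c$ then lies in the bracket $[2^{k-1},\,2^k-1]$ and $k=\lceil\lg(c+1)\rceil$. A standard binary search inside this bracket, always re-shifting the original polynomial $A$ (not an incrementally shifted one, to avoid cumulative growth in coefficient bitsize), pins down $c$ exactly with $O(\lg c)$ additional tests. Each test consists of computing $A(x+t)$ for an integer $t$ of bitsize $O(\lg c)$ and then scanning the coefficient signs, which is linear in the output size.

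\medskip

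\noindent The complexity analysis then reduces to bounding the cost of each Taylor shift. Using the fast Taylor-shift algorithm of von zur Gathen and Gerhard~\cite{GatGer:fast_shift:97}, a shift of $A$ by an integer of bitsize $s$ costs $\sOB(d\tau + d^2 s)$ and produces a polynomial with coefficients of bitsize $O(\tau + d s)$. In the doubling phase the $k$-th test uses $s=k$, so the total cost telescopes to
\[
\sum_{k=0}^{\lceil \lg c\rceil} \sOB(d\tau + d^2 k) \;=\; \sOB\bigl(d\tau\lg c + d^2 \lg^2 c\bigr),
\]
and the binary phase contributes $O(\lg c)$ shifts of parameter $s=O(\lg c)$, giving the same bound $\sOB(d\tau\lg c + d^2 \lg^2 c)$. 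Adding the two phases yields the claimed complexity.

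\medskip

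\noindent\textbf{Main obstacle.} The delicate step is the first paragraph: justifying that the sign-variation count drops \emph{exactly} when $t$ crosses the real part of a contributing root, as opposed to only an inequality in one direction. Descartes' rule (Th.~\ref{th:Descartes}) and Budan's theorem by themselves give only monotonicity and a congruence mod $2$, so one must argue carefully that (i) for integer $t\le c$ no contributing real part is crossed, hence $\var(A(x+t))=\var(A)$, and (ii) strictly past $\lfloor\Re(\alpha)\rfloor$ the count strictly drops by the same Budan argument applied to the pair of complex-conjugate roots (or single real root) realizing $\alpha$. With the notion of \emph{contributing root} as set up in Section~\ref{sec:classic-cf}, this reduces to the already-established fact that $\var(A(x+t))$ changes only at the real parts of contributing roots and that each such crossing strictly decreases it, which is what legitimizes the binary-search characterization used in the algorithm.
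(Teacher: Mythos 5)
Your proposal is correct and follows essentially the same route as the paper: characterize $c$ via the first integer shift at which the sign-variation count strictly drops, locate it by exponential (doubling) search followed by binary search, and charge each test as a fast Taylor shift of cost $\sOB(d\tau + d^2\lg c)$ over $O(\lg c)$ tests. The subtlety you flag about why the count drops exactly at the real part of a contributing root is likewise left informal in the paper's own proof, so your treatment matches it in both substance and level of rigor.
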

\begin{proof}
  We compute the corresponding integer using the technique of the
  exponential search, see for example \cite{KweMeh-IPL-2003}.  Without
  loss of generality, we may assume that the real root is not in
  $(0,1)$, since in this case we should return 0.
  
  We perform the transformation $x \mapsto x + 2^0$ to the polynomial,
  and then the transformation $x \mapsto x + 1$. 
  If the number of sign variations of the resulting polynomial
  compared to the original one decreases,
  then  $2^0 = 1$ is the partial quotient. 
  If not, then we perform the transformation $x \mapsto x + 2^1$. If the number of sign variations
  does not decrease, then we perform $x \mapsto x + 2^2$. 
  Again if the number of sign variations does not decrease, then we perform 
  $x \mapsto x + 2^3$ and so on.
  Eventually, for some positive integer $k$, 
  there would be a loss in the sign variations between transformations 
  $x \mapsto x + 2^{k-1}$ and $x \mapsto x +2^{k}$.
  In this case the partial quotient $c$, which we want to compute,
  satisfies  $2^{k-1} < c < 2^{k} < 2\, c$.
  The exact value of $c$ is computed by performing binary search in the interval
  $[2^k,  2^{k+1}]$. 
  We deduce that the number of transformations that we need to perform is 
  $2k + \OO(1) = 2 \lg\floor{c} + \OO( 1)$.
  
  In the worst case, each transformation corresponds to an
  asymptotically fast Taylor shift with a number of bitsize $\OO( \lg{c})$,
  which costs% 
  \footnote{{\scriptsize Following Th.~2.4(E) in \cite{GatGer:fast_shift:97} 
      the cost of performing the operation $f(x+a)$, where $\dg{f}=n$, 
      $\bitsize{f}=\tau$ and $\bitsize{a}=\sigma$ is
      $\OB( \Multiply{n\tau + n^2\sigma)\lg{n}})$, 
      and if we assume fast multiplication algorithms between integers, 
      then it becomes
      $\sOB(n\tau + n^2\sigma)$.}}
  $\OB( \Multiply{d \tau + d^2 \lg{c}}\lg{d})$
  \cite[Th.~2.4]{GatGer:fast_shift:97}.
  By considering fast multiplication algorithms 
  the costs becomes  $\sOB( d \tau + d^2 \lg{c})$
  and multiplying by the number of transformations needed,
  $\lg{c}$, we conclude the proof.
    
  It is worth noticing that we do not consider 
  the cases $c = 2^{k}$ or $c = 2^{k+1}$, 
  since then we have computed, exactly, a rational root.
\end{proof}

\subsection{Shifts operations and total complexity}

Up to some constant factors, we can replace $\Delta$ in
Th.~\ref{th:Vincent} by $\Delta_{\gamma}$, refer to
\cite{sharma-tcs-2008} for a proof. This allows us to estimate the
number, $m_{\gamma}$, of partial quotients needed, in the worst case,
to isolate the positive real part of a root $\gamma$. It holds 
\begin{displaymath}
  m_{\gamma} 
  \leq \frac{1}{2}(1 + \log_{\phi}{2} -  \lg{ \Delta_{\gamma}})
  \leq 2 - \frac{1}{2} \lg{ \Delta_{\gamma}}.
\end{displaymath}
The transformed polynomial has either one
or zero sign variation and if $\gamma \in \RR$, then the corresponding
interval isolates $\gamma$ from the other roots of $A$.
The associated continued fraction of (the real part of) $\gamma$ is
$[q_0^{\gamma}, q_1^{\gamma}, \dots, q_{m_{\gamma}}^{\gamma}]$.
It holds that $\sum_{\gamma}{m_{\gamma}} = \OO( d^2 + d \tau)$ \cite{te-tcs-2008,sharma-tcs-2008}.
The following lemma bounds the bitsize of the partial quotients,
$q_k^{\gamma}$, of a root $\gamma$.

\begin{lemma}
  \label{lem:q-bounds}
  Let $A \in \ZZ[x]$, such that $\dg{A}=d$ and $\bitsize{A}=\tau$.
  For the real part of any root $\gamma$ it holds 
  \begin{displaymath}
    \sum_{j=0}^{m_{\gamma}}{ \lg( q_j^{\gamma})} 
    = \lg(q_0^{\gamma}) + \sum_{j=1}^{m_{\gamma}}{ \lg( q_j^{\gamma})} 
    \leq \lg(q_0^{\gamma}) + 1 - \lg{\Delta_{\gamma}},
  \end{displaymath}
  where we assume that $q_0^{\gamma} > 0$,
  and the term $1 - \lg{\Delta_{\gamma}}$ appears only when 
  $\Delta_{\gamma} < 1$, i.e. when $m_{\gamma} \geq 1$.
  Moreover 
  $\sum_{\gamma}{ \lg( q_0^{\gamma})} \leq \lg{\norm{A}_2} \leq \tau + \lg{d}$
  and if $\gamma$ ranges over a subset of distinct roots of $A$, then 
  \begin{displaymath}
    \sum_{\gamma} \sum_{k=0}^{m_{\gamma}} \lg{q_k^{\gamma}}
     \leq 1 + \tau + \lg{d} - \lg{ \prod_{\gamma}{ \Delta_{\gamma}}} 
     = \OO( d^2 + d \tau).
   \end{displaymath}  
\end{lemma}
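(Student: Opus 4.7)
The plan is to establish the three stated bounds in turn: first the per-root tail bound on $\sum_{j \geq 1} \lg q_j^\gamma$, then the aggregate bound on the integer parts $\sum_\gamma \lg q_0^\gamma$, and finally to combine them with Th.~\ref{th:dmm-1}. For the first inequality I would compare the product $\prod_{j=1}^{m_\gamma} q_j^\gamma$ against the convergent denominator $Q_{m_\gamma}^\gamma$. A short induction on the recurrence (\ref{eq:recur-PQ}), using $q_j^\gamma \geq 1$ and $Q_{j-2}^\gamma \geq 0$, yields $\prod_{j=1}^{n} q_j^\gamma \leq Q_n^\gamma$ for every $n \geq 1$, so it would suffice to show $Q_{m_\gamma}^\gamma = \OO(1/\Delta_\gamma)$ whenever $\Delta_\gamma < 1$. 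To do this I would exploit the minimality of $m_\gamma$: at step $m_\gamma - 1$ the transformed polynomial still has more than one sign variation, hence the M\"obius-image interval at that step, of length $1/(Q_{m_\gamma-1}^\gamma Q_{m_\gamma-2}^\gamma)$, must still contain, besides (the real part of) $\gamma$, at least one further root of $A$ contributing to the remaining sign variations. This would force $\Delta_\gamma \leq 1/(Q_{m_\gamma-1}^\gamma Q_{m_\gamma-2}^\gamma)$, and combined with $Q_{m_\gamma}^\gamma = q_{m_\gamma}^\gamma Q_{m_\gamma-1}^\gamma + Q_{m_\gamma-2}^\gamma$ and an analogous width bound on the last quotient $q_{m_\gamma}^\gamma$, one arrives at $\prod_{j=1}^{m_\gamma} q_j^\gamma \leq 2/\Delta_\gamma$, which is exactly the claimed inequality (the $+1$ term being active precisely when $\Delta_\gamma < 1$, i.e.\ when $m_\gamma \geq 1$).

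For the second inequality, the construction of the continued fraction gives $q_0^\gamma \leq \max(1, \abs{\gamma})$ for every root $\gamma$ with positive real part, so
\[
\sum_\gamma \lg q_0^\gamma \;\leq\; \lg \prod_\gamma \max(1, \abs{\gamma}) \;\leq\; \lg \Mahler{A} \;\leq\; \lg \norm{A}_2 \;\leq\; \tau + \lg d,
\]
by Landau's inequality $\Mahler{A} \leq \norm{A}_2 \leq \sqrt{d+1}\,2^\tau$ already recalled in Section~\ref{sec:classic-cf}. Summing the first inequality over all $\gamma$ in the given subset of distinct roots and adding the $q_0^\gamma$ contribution I would obtain
\[
\sum_\gamma \sum_{k=0}^{m_\gamma} \lg q_k^\gamma \;\leq\; \tau + \lg d + r - \lg \prod_\gamma \Delta_\gamma,
\]
where $r \leq d$ counts the roots that contribute the $+1$ term; absorbing this linear term into the implicit constant and applying Th.~\ref{th:dmm-1} to replace $-\lg \prod_\gamma \Delta_\gamma$ by $3d^2 + 3d \lg d + 3d \tau$ would yield the stated $\OO(d^2 + d\tau)$.

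The hard part will be justifying the bound $\Delta_\gamma \leq 1/(Q_{m_\gamma-1}^\gamma Q_{m_\gamma-2}^\gamma)$ used in the first paragraph. As stressed in the discussion preceding the lemma, Descartes' rule counts not only positive real roots but also complex roots with positive real part, so the ``other root'' that survives in the M\"obius-image interval at step $m_\gamma-1$ need not be real. One therefore has to argue, via the geometry of the two-circle/Obreshkoff lens region underlying Vincent's theorem (Th.~\ref{th:Vincent}), that the complex distance from this second root to $\gamma$ is still controlled by the length of the real interval produced by the M\"obius transformation, and hence bounded below by $\Delta_\gamma$. Once that geometric step is secured, the rest of the proof reduces to routine bookkeeping with the recurrence (\ref{eq:recur-PQ}) and Landau's inequality.
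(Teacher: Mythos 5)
Your argument follows the paper's proof essentially step for step: $\prod_{k\geq 1} q_k^{\gamma} \leq Q_{m_{\gamma}}^{\gamma}$ from the recurrence (\ref{eq:recur-PQ}), a width bound of the form $Q_{m_{\gamma}}^{\gamma} Q_{m_{\gamma}-1}^{\gamma} \leq 2/\Delta_{\gamma}$ coming from the minimality of $m_{\gamma}$, Landau's inequality $\prod_{\gamma} q_0^{\gamma} \leq \Mahler{A} \leq \norm{A}_2$ for the integer parts, and Th.~\ref{th:dmm-1} to aggregate (your explicit $+r$ with $r\leq d$ is in fact slightly more careful than the paper's $+1$). The one step you flag as ``the hard part'' --- justifying the interval-width versus $\Delta_{\gamma}$ bound via the two-circle geometry --- is precisely the inequality the paper does not prove either, but imports verbatim from Sharma \cite{sharma-tcs-2008} as Eq.~(\ref{eq:QQ-D}), so your proof is complete modulo the same citation.
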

\begin{proof}
  The Mahler measure, $\Mahler{A}$, of
  $A$ is 
  $\Mahler{A} = a_d \prod_{\abs{\gamma} \geq 1}{\abs{\gamma}}$.
  It also holds 
  $\Mahler{A} \leq \norm{A}_2 \leq \sqrt{d+1} \norm{A}_{\infty} 
  = 2^{\tau} \sqrt{d+1} $,
  and so $\prod_{\abs{\gamma} \geq 1}{\abs{\gamma}} \leq 2^{\tau} \sqrt{d+1}$.
  Since $q_0^{\gamma}$ is the integer part of $\gamma$ it holds
  $\prod_{\gamma}{q_0^{\gamma}} \leq \prod_{\abs{\gamma} \geq 1}{\abs{\gamma}} 
  \leq \norm{ A}_{2}$
  and thus 
  \begin{equation}
    \sum_{\gamma}{ \lg( q_0^{\gamma})} \leq \lg{\sqrt{d+1}} 
    + \lg{ \norm{ A}_{\infty}}  \leq \tau + \lg{d}.
    \label{eq:q0-bound}
  \end{equation}
  Following  \cite{sharma-tcs-2008} we know that 
  \begin{equation}
    \frac{1}{Q_{m_{\gamma}}^{\gamma} Q_{m_{\gamma}-1}^{\gamma}} \geq \frac{\Delta_{\gamma}}{2}
    \Leftrightarrow
    Q_{m_{\gamma}}^{\gamma} Q_{m_{\gamma}-1}^{\gamma} \leq 2/\Delta_{\gamma}.
    \label{eq:QQ-D}
  \end{equation}
  From Eq.~(\ref{eq:recur-PQ}) we get 
  $Q_k = q_k Q_{k-1} + Q_{k-2} \Rightarrow Q_k \geq q_k Q_{k-1}$, for
  $k\geq 1$.
  I we apply the previous relation recursively we get
  $\prod_{k=1}^{m_{\gamma}}{ q_k^{\gamma}} \leq Q_{m_{\gamma}}^{\gamma}  \leq 2/\Delta_{\gamma}$
  and
  $\prod_{k=1}^{m-1}{ q_k^{\gamma} } \leq  Q_{m_{\gamma}-1}^{\gamma} \leq 2/\Delta_{\gamma}$,
  and so 
  \begin{displaymath}
    \sum_{k=1}^{m_{\gamma}}{ \lg{ q_k^{\gamma}}} = 
    \lg{ \prod_{k=1}^{m}{ q_k^{\gamma}}} 
    \leq 1 - \lg{\Delta_{\gamma}}.
  \end{displaymath}

  Finally, we sum over all roots $\gamma$ and 
  we use (\ref{eq:q0-bound}) and Th.~\ref{th:dmm-1},
  \begin{displaymath}
    \begin{aligned}
      \sum_{\gamma} \sum_{k=0}^{m_{\gamma}} \lg{q_k^{\gamma}} & =
      \sum_{\gamma} \lg{q_0^{\gamma}} 
      + \sum_{\gamma}  \sum_{k=1}^{m_{\gamma}} \lg{q_k^{\gamma}} 
       \leq \sum_{\gamma} \lg{q_0^{\gamma}} + \sum_{\gamma} (1 - \lg{\Delta_{\gamma}}) \\
      & \leq 1 + \tau + \lg{d} + d^2 + 3d\lg{d} + 3 d \tau,
    \end{aligned}
  \end{displaymath}
  which completes the proof.
\end{proof}

% \begin{remark}
%   \label{rem:Delta-less-than-1}
%   It is worth noticing that in the previous lemma it is implicitly
%   implied $\Delta_{\gamma} < 1$, which means that there is another,
%   possible complex, root in distance $<1$ to $\gamma$. This is so
%   since otherwise the root could be isolated without computing any
%   partial quotient, with the exception of $q_0^{\gamma}$.
% \end{remark}

At each step of CF we compute a partial quotient and we apply a Taylor
shift to the polynomial with this number.
In the worst case we increase the bitsize of the polynomial 
by an additive factor of $\OO( d \lg( q_k^{\gamma}))$,
at each step.
The overall complexity of CF is dominated by the computation of the
partial quotients.

The following table summarizes the costs of computing the partial
quotients of $\gamma$ that we need:
\begin{displaymath}
% \begin{array}{clr}
\begin{aligned}
    0^{th} \text{ step}  && \sOB( d \tau \lg(q_0^{\gamma}) +  d^2 \lg(q_0^{\gamma}) \lg(q_0^{\gamma}))\\\noalign{\vspace{3pt}}
    1^{st} \text{ step}  && \sOB( d \tau\lg(q_1^{\gamma}) + d^2 \lg( q_0^{\gamma} q_1^{\gamma}) \lg( q_1^{\gamma})) \\
    &&
    \Paren{= \sOB( d(\tau + d \lg( q_0^{\gamma}))\lg(q_1^{\gamma}) + d^2 \lg^2(q_1^{\gamma}))} \\\noalign{\vspace{3pt}}
    2^{nd} \text{ step}  && \sOB( d \tau\lg(q_2^{\gamma}) + d^2  \lg( q_0^{\gamma} q_1^{\gamma} q_2^{\gamma}) \lg( q_2^{\gamma}))\\\noalign{\vspace{3pt}}
    \vdots& \\\noalign{\vspace{3pt}}
    m_{\gamma}^{th} \text{ step}  && \sOB( d \tau \lg(q_m^{\gamma}) + d^2 \lg(\prod_{k=0}^{m}{q_k^{\gamma}}) \lg(q_m^{\gamma})) 
    &  \\ \noalign{\vspace{3pt}}
%     \hline
%     && \sOB\Paren{ 
%       d \tau \sum_{k=0}^{m}{\lg(q_k)} + 
%       d^2 \sum_{k=0}^{m}{\lg(q_{k}) \lg \prod_{j=0}^{m}{q_j}}} \\\noalign{\vspace{3pt}}
   % &&\mathcal{C}^{\gamma}_{1} + \mathcal{C}^{\gamma}_{2} = \mathcal{C}^{\gamma} 
\end{aligned}
%\end{array}
\end{displaymath}

We sum over all steps to derive the cost for isolating $\gamma$,
$\mathcal{C}^{\gamma}$,
and after applying some obvious simplifications and use
Lem.~\ref{lem:q-bounds} we get
\begin{displaymath}
  \begin{aligned}
   \mathcal{C}^{\gamma} & = \sOB\Paren{ 
      d \tau \sum_{k=0}^{m_{\gamma}}{\lg(q_k^{\gamma})} + 
      d^2 \sum_{k=0}^{m_{\gamma}}{\lg(q_{k}^{\gamma}) \lg \prod_{j=0}^{m_{\gamma}}{q_j^{\gamma}}}}  
     = \sOB\Paren{ d\tau \sum_{k=0}^{m_{\gamma}}{\lg(q_k^{\gamma})} +
      d^2\Paren{\sum_{k=0}^{m_{\gamma}}{\lg(q_k^{\gamma})}}^2 } \\
    & =
  \sOB\Paren{ d\tau (\lg(q_0^{\gamma}) - \lg{\Delta_{\gamma}}) + d^2 (\lg^2(q_0^{\gamma}) + \lg^2{\Delta_{\gamma}})}.
  \end{aligned}
\end{displaymath}

To derive the overall complexity, $\mathcal{C}$, 
we sum over all the roots that CF tries to isolate
and we use Lem.~\ref{lem:q-bounds} and Th.~\ref{th:dmm-1}. Then
{\footnotesize
\begin{equation}
  \label{eq:complexity-CF}
  \begin{array}{l}
    \mathcal{C} = 
    \sum_{\gamma}{\mathcal{C}^{\gamma}}  \vspace{3pt}\\
    = \sOB\Paren{ 
      d\tau \sum_{\gamma} \lg(q_0^{\gamma}) - d\tau\sum_{\gamma}\lg{\Delta_{\gamma}} 
      + d^2\sum_{\gamma} \lg^2(q_0^{\gamma}) + d^2\sum_{\gamma} \lg^2{\Delta_{\gamma}}
    } \vspace{3pt}\\
    = \sOB\Paren{ 
      d\tau \sum_{\gamma} \lg(q_0^{\gamma}) - d\tau\sum_{\gamma}\lg{\Delta_{\gamma}}
      + d^2 (\sum_{\gamma} \lg(q_0^{\gamma}))^2 
      + d^2 (\sum_{\gamma} \lg{\Delta_{\gamma}})^2
    } \vspace{3pt}\\
    = \sOB\Paren{ 
      d\tau (\tau + \lg{d})
      +d \tau ( d^2 + d \lg{d} + d \tau) 
      + d^2 (\tau^2 + \lg^2{d})
      + d^2 (d^4 + d^2 \tau^2)
    } \vspace{3pt}\\
     = \sOB( d^6 + d^4 \tau^2).
  \end{array}
\end{equation}
}

In the previous equation it possible to write $\sum_{\gamma}
\lg^2{\Delta_{\gamma}} \leq \Paren{\sum_{\gamma}
  \lg{\Delta_{\gamma}}}^2$ because $\Delta_{\gamma} < 1$, and hence
$\lg{\Delta_{\gamma}} < 0$, for all $\gamma$ that are involved in the
sum.  For the roots that holds $\Delta_{\gamma} \geq 1$ the
algorithm isolates them without computing any of their partial
quotients, with the exception of $q_0^{\gamma}$. 
% See also Rem.~\ref{rem:Delta-less-than-1}.

The previous discussion leads to the following theorem.
\begin{theorem}
  \label{th:CF-bound}
  Let $A \in \ZZ[x]$, where $\dg{A}=d$ and $\bitsize{A} = \tau$.
  The worst case complexity of isolating the real roots of $A$ using
  the CF is $\sOB(d^6 + d^4 \tau^2)$.
\end{theorem}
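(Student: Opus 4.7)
The plan is to charge the work of the whole algorithm to individual roots of $A$ and add up. For each root $\gamma$ that CF isolates, denote by $\mathcal{C}^{\gamma}$ the total cost spent along the root-to-leaf path for $\gamma$ in the CF tree, and compute $\mathcal{C} = \sum_{\gamma} \mathcal{C}^{\gamma}$, then plug in the aggregate bounds from Lem.~\ref{lem:q-bounds} and Th.~\ref{th:dmm-1}.

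First I would analyze the per-step cost. At the $k$-th node on the path to $\gamma$ the polynomial to which \func{plb} is applied has been obtained from $A$ by a Taylor shift with an integer of bitsize $\OO(\lg(q_0^{\gamma} q_1^{\gamma} \cdots q_{k-1}^{\gamma}))$, so by standard bookkeeping on Taylor shifts its bitsize is $\OO(\tau + d \lg \prod_{j < k} q_j^{\gamma})$. Substituting this into Lem.~\ref{lem:plb-computation} gives the entry for the $k$-th step displayed in the table preceding the theorem, namely $\sOB(d\tau \lg q_k^{\gamma} + d^2 \lg(\prod_{j \le k} q_j^{\gamma})\, \lg q_k^{\gamma})$. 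Summing over $k = 0,\dots,m_{\gamma}$ and using $\sum_k \lg q_k^{\gamma} \le \lg q_0^{\gamma} + 1 - \lg \Delta_{\gamma}$ from Lem.~\ref{lem:q-bounds}, the nested sum collapses to
\[
\mathcal{C}^{\gamma} = \sOB\!\left(d\tau\bigl(\lg q_0^{\gamma} - \lg \Delta_{\gamma}\bigr) + d^2\bigl(\lg^2 q_0^{\gamma} + \lg^2 \Delta_{\gamma}\bigr)\right).
\]

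Next I would sum over the roots $\gamma$ isolated by CF. The linear contributions $\sum_{\gamma} \lg q_0^{\gamma} \le \tau + \lg d$ and $\sum_{\gamma} (-\lg \Delta_{\gamma}) = \OO(d^2 + d\tau)$ come from Lem.~\ref{lem:q-bounds} and Th.~\ref{th:dmm-1} respectively. For the quadratic contributions $\sum_\gamma \lg^2 q_0^{\gamma}$ and $\sum_\gamma \lg^2 \Delta_{\gamma}$, I would use the pointwise bound $\sum_{\gamma} x_{\gamma}^2 \le (\sum_{\gamma} |x_{\gamma}|)^2$, valid whenever the $x_{\gamma}$ all have the same sign. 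Assembling everything gives exactly the four-term bound $d\tau(\tau + \lg d) + d\tau(d^2 + d \lg d + d\tau) + d^2(\tau + \lg d)^2 + d^2(d^2 + d\tau)^2 = \sOB(d^6 + d^4 \tau^2)$.

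The subtle point — and the one I would treat with care — is the legitimacy of $\sum_{\gamma} \lg^2 \Delta_{\gamma} \le (\sum_{\gamma} \lg \Delta_{\gamma})^2$: it requires that every $\Delta_{\gamma}$ appearing in the sum be strictly less than $1$. For roots with $\Delta_{\gamma} \ge 1$, Vincent's theorem (Th.~\ref{th:Vincent}) terminates the CF path after the $0$-th partial quotient, so they contribute only through $\lg q_0^{\gamma}$ and are absorbed by the first and third terms. Once this case distinction is made explicit, the remaining roots all satisfy $\lg \Delta_{\gamma} < 0$, the squaring inequality applies, and the claimed bound $\sOB(d^6 + d^4\tau^2)$ follows.
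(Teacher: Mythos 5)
Your proposal is correct and follows essentially the same route as the paper: the same per-root decomposition of the cost into the table of per-step Taylor-shift/\func{plb} costs, the same collapse via Lem.~\ref{lem:q-bounds}, and the same aggregation over roots using Th.~\ref{th:dmm-1}, including the paper's own remark justifying $\sum_{\gamma}\lg^2\Delta_{\gamma} \leq (\sum_{\gamma}\lg\Delta_{\gamma})^2$ by excluding roots with $\Delta_{\gamma}\geq 1$.
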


\begin{algorithm2e}[ht]
  %\linesnumbered
  %%  \SetFuncSty{textsc}
  \SetKw{RET}{{\sc return}}
  \SetAlgoVlined
  \KwIn{$A \in \ZZ[X], M(X) = \frac{k X + l}{m X + n}$, $k,l,m,n \in \ZZ$}
  \KwOut{A list of isolating intervals}
  \KwData{Initially $M(X) = X$, i.e. $k=n=1$ and $l=m=0$}

  \BlankLine
  
  \If{ $A(0) = 0$} {
    \FuncSty{OUTPUT}  \FuncSty{Interval}( $M(0), M(0)$) \;
    $A \leftarrow A(X) / X$\;
    \FuncSty{CF}($A, M$)\;
  }
  
  $V \leftarrow \FuncSty{Var}(A)$\;
  \lIf{$V = 0$}{\RET}\;
  \If{$V = 1$}{
    \FuncSty{OUTPUT} \FuncSty{Interval}( $M(0), M(\infty)$)\;
    \RET\;
  }
  
  \BlankLine

  $b \leftarrow \FuncSty{PLB}(A)$ \tcp*[h]{$\FuncSty{PLB} \equiv \FuncSty{PositiveLowerBound}$} \;
  \nllabel{alg:compute:bound}

  \lIf{$b \geq 1$}{
    $A_b \leftarrow A(b+X), M \leftarrow M(b +X)$ \;
    \nllabel{alg:bound:shift}  
  }

  \BlankLine
  
  $A_1 \leftarrow A_b(1+X)$, $M_1 \leftarrow M(1+X)$ \;
  \nllabel{alg:A1}
  \FuncSty{CF}$(A_1, M_1)$ \tcp*[h]{Looking for real roots in $(1, +\infty)$}\;

  $A_2 \leftarrow A_b(\frac{1}{1+X})$, $M_2 \leftarrow M(\frac{1}{1+X})$ \;
  \nllabel{alg:A2}
  \FuncSty{CF}$(A_2, M_2)$ \tcp*[h]{Looking for real roots in $(0, 1)$} \;
  \nllabel{alg:A2transform}
%%   $V_1 \leftarrow \FuncSty{Var}(A_1)$ \;
%%   \If(\tcp*[h]{Possibly real roots in $(0, 1)$}){$V < V_1$}{
%%     $A_2 \leftarrow A(\frac{1}{1+X})$, $M_2 \leftarrow M(\frac{1}{1+X})$ \;
%%     \FuncSty{CF}$(A_2, M_2)$\;
%%   }

  \RET \;
  \caption{\FuncSty{CF}($A, M$)}
  \label{alg:CF}
\end{algorithm2e}

{%\footnotesize
  \bibliographystyle{plain}
  \bibliography{cf}
} %% footnotsize

\end{document}